\documentclass{article}

\usepackage[numbers,sort&compress,longnamesfirst,sectionbib]{natbib}
\usepackage{times}
\usepackage{helvet}
\usepackage{courier}

\usepackage{amsmath, amsthm, amsfonts}
\usepackage{algorithm,algpseudocode}
\usepackage{tablefootnote}
\usepackage{verbatim}
\usepackage{changepage}
\usepackage{multirow}
\usepackage{hhline}
\usepackage{graphicx}
\usepackage{lscape}

\newtheorem{theorem}{Theorem}
\newtheorem{corollary}{Corollary}

\newcommand{\exactMIP}{exactMIP}

\frenchspacing
\setlength{\pdfpagewidth}{8.5in}
\setlength{\pdfpageheight}{11in}

\setcounter{secnumdepth}{0}  

\begin{document}
\title{A Fully Polynomial Time Approximation Scheme for Packing While Traveling}
\date{}

\author{
Frank Neumann\textsuperscript{1}, 
Sergey Polyakovskiy\textsuperscript{1}, 
Martin Skutella\textsuperscript{2}, 
Leen Stougie\textsuperscript{3}, \\
Junhua Wu\textsuperscript{1}
\\
\\
\textsuperscript{1} Optimisation and Logistics, School of Computer Science,\\
 The University of Adelaide, Australia
\\
\textsuperscript{2} 
    Combinatorial Optimisation \& Graph Algorithms,\\ Department of Mathematics,\\
 Technical University of Berlin, Germany
\\
\textsuperscript{3} CWI and Operations Research,\\ Dept. of Economics and Business Administration,\\
 Vrije Universiteit, Amsterdam, The Netherlands
}

\maketitle

\begin{abstract}
Understanding the interactions between different combinatorial optimisation problems in real-world applications is a challenging task. Recently, the traveling thief problem (TTP), as a combination of the classical traveling salesperson problem and the knapsack problem, has been introduced to study these interactions in a systematic way. We investigate the underlying non-linear packing while traveling (PWT) problem of the TTP where items have to be selected along a fixed route. We give an exact dynamic programming approach for this problem and a fully polynomial time approximation scheme (FPTAS) when maximising the benefit that can be gained over the baseline travel cost. Our experimental investigations show that our new approaches outperform current state-of-the-art approaches on a wide range of benchmark instances.
\end{abstract}

\section{Introduction}
Combinatorial optimisation problems play a crucial role in important areas such as planning, scheduling and routing. Many combinatorial optimisation problems have been studied extensively in the literature. Two of the most prominent ones are the traveling salesperson problem (TSP) and the knapsack problem (KP) and numerous high performing algorithms have been designed for these two problems. 

Looking at combinatorial optimisation problems arising in real-world applications, one can observe that real-world problems often are composed of different types of combinatorial problems. For example, delivery problems usually consists of a routing part for the vehicle(s) and a packing part of the goods onto the vehicle(s). Recently, the traveling thief problem (TTP)~\cite{Bonyadi2013} has been introduced to study the interactions of different combinatorial optimisation problems in a systematic way and to gain better insights into the design of multi-component problems. The TTP combines the TSP and KP by making the speed that a vehicle travels along a TSP tour dependent on the weight of the selected items. Furthermore, the overall objective is given by the sum of the profits of the collected items minus the weight dependent travel cost along the chosen route. A wide range of heuristic search algorithms~\cite{Faulkner2015,DBLP:journals/soco/MeiLY16,ElYafrani:2016:PVS:2908812.2908847} and a large benchmark set~\cite{Polyakovskiy2014TTP} have been introduced for the TTP in recent years. However, up to now there are no high performing exact approaches to deal with the TTP. 

The study of non-linear planning problems is an important topic and the design of approximation algorithms has gained increasing interest in recent years~\cite{DBLP:conf/aaai/HoyN15,DBLP:conf/aaai/YangN16}. The non-linear packing while traveling problem (PWT) has been introduced in \cite{sergey15} to push forward systematic studies on multi-component problems and deals with the packing part combined with the non-linear travel cost function of the TTP. The PWT can be seen as the TTP when the route is fixed but the cost still depends on the weight of the items on the vehicle. The problem is motivated by gaining advanced precision when minimising transportation costs that may have non-linear nature, for example, in applications where weight impacts the fuel costs~\cite{GOODYEAR,Lin14}. From this point of view, the PWT is a baseline problem in various vehicle routing problems with non-linear costs. Some specific applications of the PWT may deal with a single truck collecting goods in large remote areas without alternative routes, that is a single main route that a vehicle has to follow may exist while any deviations from it in order to visit particular cities are negligible~\cite{DBLP:journals/corr/PolyakovskiyN15}. The problem is $\mathcal{NP}$-hard even without the capacity constraint usually imposed on the knapsack. Furthermore, exact and approximative mixed integer programming approaches as well as a branch-infer-and-bound approach~\cite{DBLP:journals/corr/PolyakovskiyN15} have been developed for this problem.

We introduce a dynamic programming approach for PWT. The key idea is to consider the items in the order they appear on the route that needs to be travelled and apply dynamic programming similar as for the classical knapsack problem~\cite{Toth1980}. When considering an item, the decision has to be made on whether or not to pack the item. The dynamic programming approach computes for the first $i$, $1 \leq i \leq m$, items and each possible weight $w$ the maximal objective value that can be obtained. As the programming table that is used depends on the number of different possible weights, the algorithm runs in pseudo-polynomial time.

After having obtained the exact approach based on dynamic programming, we consider the design of a fully polynomial approximation scheme (FPTAS)~\cite{HochbaumApproximation}. 

First, we show that it is $\mathcal{NP}$-hard to decide whether a given instance of PWT has a non-negative objective value. This rules out any polynomial time algorithm with finite approximation ratio under the assumption $P\not=NP$. Due to this, we design a FPTAS for the amount that can be gained over the travel cost when the vehicle travels empty (which is the minimal possible travel cost). Our FPTAS makes use of the observation that the item with the largest benefit leads to an objective value of at least $OPT/m$ and uses appropriate rounding in the previously designed dynamic programming approach.

We evaluate our two approaches on a wide range of instances from the TTP benchmark set~\cite{Polyakovskiy2014TTP} and compare it to the exact and approximative approaches given in \cite{DBLP:journals/corr/PolyakovskiyN15}. Our results show that the large majority of the instances that can be handled by exact methods, are solved much quicker by dynamic programming than the previously developed mixed integer programming and branch-infer-and-bound approaches. Considering instances with a larger profit and weight range, we show that the choice of the approximation guarantee significantly impacts the runtime behaviour.

The paper is structured as follows. In Section~2, we introduce the problem. We present the exact dynamic programming approach in Section~3 and design a FPTAS in Section~4. Our experimental results are shown in Section~5. Finally, we finish with some conclusions.

\section{Problem Statement}\label{sec:prelim}
The PWT can be formally defined as follows. Given are $n+1$ cities, distances $d_i$, $1\leq i \leq n$, from city $i$ to city $i+1$, and a set of items $M$, $|M| = m$, distributed all over the first $n$ cities. W.l.o.g., we assume $m = \Omega(n)$ to simplify our notations. Each city $i$, $1 \leq i \leq n$, contains a set of items $M_i \subseteq M$, $|M_i| = m_i$. Each item $e_{ij} \in M_i$, $1 \leq j \leq m_i$, is characterised by its positive integer profit $p_{ij}$ and weight $w_{ij}$.

In addition, a fixed route $N = (1, 2, ..., n+1)$ is given that is traveled by a vehicle with velocity $v \in [v_{min},v_{max}]$. Let $x_{ij} \in \{0, 1\}$ be a variable indicating whether or not item $e_{ij}$ is chosen in a solution. Then a set $S \subseteq M$ of selected items can be represented by a decision vector $x = (x_{11},x_{12},...,x_{1m_1},x_{21},...,x_{nm_n}).$ The total benefit of selecting a subset of items $S$ is calculated as

$$	B(x) = P(x) - R \cdot T(x),$$

where

	$$P(x) = \sum\limits_{i=1}^n \sum\limits_{j=1}^{m_i} p_{ij}x_{ij}$$

represents the total profit of selected items and 

$$	T(x) = \sum\limits_{i=1}^n \frac{d_i}{v_{max} - \nu\sum\limits_{k=1}^i\sum\limits_{j=1}^{m_k} w_{kj}x_{kj}}$$
is the total travel time for the vehicle carrying these items.

Here, $\nu = \frac{v_{max}-v_{min}}{W}$ is the constant defined by the input parameters, where $W$ is the capacity of the vehicle. $T(x)$ has the following interpretation: when the vehicle is traveling from city $i$ to city $i+1$, the selected items have to be carried and the maximal speed $v_{max}$ of the vehicle is reduced by a normalised amount that depends linearly on the weight of these items. Because the velocity is influenced by the weight of collected items, the total travel time increases along with their weight. Given a renting rate $R \in (0, \infty)$, $R \cdot T(x)$ is the total cost of carrying the items chosen by $x$. The objective of this problem is to find a solution $x^* = \arg max_{x \in \{0,1\}^m} B(x).$

We investigate dynamic programming and approximation algorithms~\cite{HochbaumApproximation} for the non-linear packing while traveling problem.
A FPTAS for a given maximisation problem is an algorithm $A$ that obtains for any valid input $I$ and $\epsilon$, $0 < \epsilon \leq 1$, a solution of objective value $A(I) \geq (1- \epsilon) OPT(I)$ in time polynomial in the input size $|I|$ and $1/\epsilon$.

\section{Dynamic Programming}\label{sec:dp}

We introduce a dynamic programming approach for solving the PWT. Dynamic programming is one of the traditional approaches for the classical knapsack problem~\cite{Toth1980}. The dynamic programming table $\beta$ consists of $W$ rows and $m$ columns. Items are processed in the order they appear along the path $N$ and we consider them in the lexicographic order with respect to their indices, i.e. 
$$e_{ab} \preceq e_{ij},\text{ iff } ((a < i) \vee ( a=i \wedge  b \leq j)).$$ 
Note that $\preceq$ is a total strict order and we process the items in this order starting with the smallest element. The entry $\beta_{i,j,k}$ represents the maximal benefit that can be obtained by considering all  combinations of items $e_{ab}$ with $e_{ab} \preceq e_{ij}$ leading to weight exactly $k$. We denote by $\beta(i,j, \cdot)$ the column containing the entries $\beta_{i,j,k}$.  In the case that a combination of weight $k$ doesn't exist, we set $\beta_{i,j,k}=-\infty$. We denote by $$d_{in} = \sum_{l=i}^n d_{l}$$ the distance from city $i$ to the last city $n+1$.

We denote by $B(\emptyset)$ the benefit of the empty set which is equivalent to the travel cost when the vehicle travels empty. Furthermore, $B(e_{ij})$ denotes the benefit when only item $e_{ij}$ is chosen.

For the first item $e_{ij}$ according to $\preceq$, we set 
$$\beta(i,j,0)=B(\emptyset),$$ 
$$\beta(i,j,w_{ij})=B(e_{ij}),$$ and $$\beta(i,j,k) = - \infty \text{ iff } k \not \in \{0,w_{ij}\}.$$

Let $e_{i'j'}$ be the predecessor of item $e_{ij}$ in $\preceq$.  Based on $\beta(i',j', \cdot)$ we compute for $\beta(i,j, \cdot)$ each entry $\beta_{i,j,k}$   as

\begin{displaymath}
\max \!\left\{
\begin{array}{l}
\!\!\beta_{i',j',k}
\\
\!\!\beta_{i',j',k-w_{ij}} \!+\! p_{ij} \!-\! Rd_{in} (\frac{1}{v_{max}-\nu k} \!-\! \frac{1}{v_{max}-\nu( k - w_{ij})})\\
\end{array} \right.
\end{displaymath}

Let $e_{st}$ be the last element according to $\preceq$, then $\max_k \beta(s,t,k)$ is reported as the value of an optimal solution.
We now investigate the runtime for this dynamic program.
If $d_{in}$ has been computed for each $i$, $1 \leq i \leq n-1$, which takes $O(n)$ time in total, then each entry can be compute in constant time.

\begin{theorem}
The entry $\beta(i,j,k)$ stores the maximal possible benefit for all subsets of $I_{ij} = \{e_{ab} \mid e_{ab} \preceq e_{ij}\}$ having weight $k$.
\end{theorem}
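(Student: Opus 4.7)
The plan is to induct on the rank of the item $e_{ij}$ in the total order $\preceq$. The base case handles the first item, where the initialisation realises the claim directly: $I_{ij} = \{e_{ij}\}$ admits only the subsets $\emptyset$ (weight $0$, benefit $B(\emptyset)$) and $\{e_{ij}\}$ (weight $w_{ij}$, benefit $B(e_{ij})$), and the setting $\beta(i,j,k) = -\infty$ for all other $k$ matches the fact that no other weight is realisable.

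For the inductive step, I would assume the theorem for the predecessor $e_{i'j'}$ of $e_{ij}$ and fix a target weight $k$. Any $S \subseteq I_{ij}$ of weight $k$ either omits $e_{ij}$, in which case $S \subseteq I_{i'j'}$ has weight $k$ and hence $B(S) \leq \beta_{i',j',k}$ by the inductive hypothesis, or contains $e_{ij}$, in which case $S \setminus \{e_{ij}\} \subseteq I_{i'j'}$ has weight $k - w_{ij}$ and so $B(S \setminus \{e_{ij}\}) \leq \beta_{i',j',k-w_{ij}}$. Conversely, the maximising subsets guaranteed by induction at weights $k$ and $k - w_{ij}$ (the latter possibly augmented by $e_{ij}$) give subsets of $I_{ij}$ of weight $k$ attaining the two bounds. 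Therefore, provided the second argument in the recurrence correctly measures $B(S) - B(S \setminus \{e_{ij}\})$, the maximum over the two cases is exactly $\beta_{i,j,k}$.

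The main obstacle is verifying that this difference equals $p_{ij} - R d_{in}\!\left(\frac{1}{v_{max}-\nu k} - \frac{1}{v_{max}-\nu(k-w_{ij})}\right)$. The crucial structural observation is that every item in $I_{i'j'}$ lies at a city with index at most $i$: by the lexicographic definition of $\preceq$, any $e_{ab} \preceq e_{i'j'}$ satisfies $a \leq i' \leq i$. Consequently, every item of $S \setminus \{e_{ij}\}$ is already on the vehicle throughout every segment from city $i$ to city $n+1$, so the weight carried on each such segment is exactly $k - w_{ij}$. Inserting $e_{ij}$, which is also located at city $i$, raises this weight to $k$ on precisely the segments of total length $d_{in}$ while leaving earlier segments untouched; the per-segment cost changes therefore collapse into a common factor multiplied by $R d_{in}$, which is exactly the expression appearing in the recurrence, and adding $p_{ij}$ accounts for the extra profit. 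Edge cases such as $k < w_{ij}$ or weights unrealisable from $I_{i'j'}$ are handled by the $-\infty$ convention, which ensures that spurious branches are never selected by the maximum.
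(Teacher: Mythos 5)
Your proof is correct and follows essentially the same route as the paper: induction along $\preceq$, a two-case split on whether $e_{ij}$ is included, and the observation that the cost delta for adding $e_{ij}$ depends only on the total weight $k$ and not on which items realise it. You are in fact slightly more explicit than the paper on the key structural point --- namely that every item of $I_{i'j'}$ resides at a city of index at most $i$, so the vehicle carries the full weight $k-w_{ij}$ on every segment from city $i$ to $n+1$ and only those segments are affected --- which the paper compresses into the remark that the transport cost of a fixed weight from city $i$ onward is independent of the choice of items.
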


\begin{proof}
The proof is by induction.
The statement is true for the first item $e_{ij}$ according to $\preceq$ as there are only the two options of choosing or not choosing $e_{ij}$.
Assume that $\beta(i',j',k)$ stores the maximal benefit for each weights $k$ when considering all items of $I_{i'j'}$.
There two options exist when we consider item $e_{ij}$ in addition: to include or not include $e_{ij}$. If $e_{ij}$ is not included, then the best possible value for $\beta(i,j,k)$ is $\beta(i',j',k)$. If $e_{ij}$ is included, then remaining weight has to come from the previous items whose maximal benefit has been $\beta(i',j',k-w_{ij})$. Transporting a set of items of weight $k-w_{ij}$ from city $i$ to city $n+1$ has cost 
$$\frac{Rd_{in}}{v_{max}-\nu ( k - w_{ij})}$$ and transporting a set of items of weight $k$ from city $i$ to $n+1$ has cost
 $$\frac{Rd_{in}}{v_{max}-\nu k}.$$ This cost of transporting items of a fixed weight from city $i$ to city $n+1$ is independent of the choice of items. Therefore, $\beta(i,j,k)$ stores the maximal possible benefit when considering all possible subsets of $I_{ij} = \{e_{ab} \mid e_{ab} \preceq e_{ij}\}$ having weight $k$.
\end{proof}

To speed up the computation of our DP approach, we only store an entry for $\beta(i,j,k)$ if it is not dominated by any other entry in $\beta(i,j, \cdot)$, i.e. there is no other entry $\beta(i,j,k')$ with $\beta(i,j,k') \geq  \beta(i,j,k) \text{ and } k' < k.$ This does not affect the correctness of the approach as an item $e_{ij}$ can be added to any entry of $\beta(i',j', \cdot)$ and therefore we obtain for each dominated entry at least one entry in the last column having at least the same benefit but potentially smaller weight.

\section{Approximation Algorithms}\label{sec:fptas}

We now turn our attention to approximation algorithms.
The NP-hardness proof for PWT given in~\cite{DBLP:journals/corr/PolyakovskiyN15} does not rule out polynomial time approximation algorithms. In this section, we first show that polynomial time approximation algorithms with a finite approximation ratio do not exist under the assumption $P \not = NP$. This motivates the design of a FPTAS for the amount that can be gained over the baseline cost when the vehicle is traveling empty.

\subsection{Inapproximability of PWT}
The objective function for PWT can take on positive and negative values. We show that deciding whether a given PWT instances has a solution that is non-negative is already NP-complete.
\begin{theorem}
\label{thm:nonapprox}
Given a PWT instance,
the problem to decide whether there is a solution $x$ with $B(x) \geq 0$ is NP-complete.
\end{theorem}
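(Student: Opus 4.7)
The plan is to prove both containment in NP and NP-hardness by a reduction from Partition. Membership in NP is almost immediate: given a certificate vector $x$, the travel time $T(x)$ is a sum of $n$ rationals whose common denominator has polynomial bit-length (as a product of $n$ quantities of logarithmic bit-length), so $B(x)$ can be formed exactly as a rational and compared with $0$ in polynomial time.

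For NP-hardness I would reduce from Partition. Given positive integers $a_1,\dots,a_n$ with $\sum_j a_j = 2S$, I construct a PWT instance with just two cities, placing all items at city~$1$ with $w_{1j} = p_{1j} = a_j$, and choosing the remaining parameters so that the benefit function collapses to a perfect square in the selected weight $W(x) = \sum_j a_j x_{1j}$. A concrete choice is capacity $W = S$, $v_{max}=2$, $v_{min}=1$ (hence $\nu = 1/S$), $R=1$, and $d_1=S$; then for every feasible $x$,
\[
B(x) = W(x) - \frac{S^2}{2S - W(x)} = -\frac{(W(x)-S)^2}{2S - W(x)},
\]
and since $W(x) \leq S < 2S$ the denominator is strictly positive. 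Consequently $B(x) \geq 0$ holds if and only if $W(x) = S$, which in turn is equivalent to the existence of a subset of $\{a_1,\dots,a_n\}$ summing to $S$.

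The main obstacle I expect is choosing the parameters so that all PWT requirements ($v_{min}>0$, $R>0$, speeds staying in $[v_{min},v_{max}]$, and a capacity consistent with the claimed feasible solution of weight $S$) are simultaneously satisfied while the algebra yields exactly a perfect square in $W(x)$. Once these are fixed as above, the remaining verifications (polynomial encoding size, equivalence in both directions, and polynomial-time evaluation of $B(x)$) are routine.
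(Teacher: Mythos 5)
Your reduction is essentially the paper's own argument: the paper also reduces a subset-sum-type problem to a two-city instance with $p_{1k}=w_{1k}=s_k$, $v_{max}=2$, $v_{min}=1$, capacity equal to the target sum, and $R\cdot d_1$ chosen so that the benefit becomes a concave function of the selected weight $W(x)$ with unique maximum $0$ at the target; your explicit identity $B(x) = -\frac{(W(x)-S)^2}{2S-W(x)}$ is just a cleaner way of stating the paper's claim that $f'_{R^*}$ is concave with maximum $0$ at $w^*=W$. Using Partition rather than Subset Sum is immaterial. The one substantive thing the paper does that you omit: it also proves hardness for the \emph{unconstrained} variant (no capacity bound on $\sum_j w_{1j}x_{1j}$), where your argument breaks because the denominator $2S - W(x)$ can reach $0$ when all items are selected; the paper handles this by setting the capacity to $\sum_k s_k$ and re-tuning $v_{min}$ and $R$ so the speed stays in $[v_{min},v_{max}]$ for every $x\in\{0,1\}^m$ while the benefit still peaks at $0$ exactly at weight $Q$. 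If the theorem is read as applying to the capacity-constrained problem only, your proof is complete; otherwise you need an analogue of the paper's second construction.
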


\begin{proof}
The problem is in NP as one can verify in polynomial time for a given solution $x$ whether $B(x) \geq 0$ holds by evaluating the objective function. It remains to show that the problem is NP-hard.

We address two cases: when $B(x)$ is subject to the capacity constraint and when it is unconstrained. In both cases, we reduce the $\mathcal{NP}$-complete \textit{subset sum problem} (SSP) to the decision variant of PWT which asks whether there is a solution with objective value at least 0. The input for SSP is given by $m$ positive integers $S=\left\{s_1, \ldots, s_m\right\}$ and a positive integer $Q$. The question is whether there exists a vector $x \in \left\{0,1\right\}^m$ such that $\sum_{k=1}^m {s_kx_k} = Q$. We encode the instance of SSP given by $S$ and $Q$ as the instance of PWT, which consists of two cities. The first city contains all the $m$ items and the distance between the cities is $d_1=1$. We assume that $p_{1k}=w_{1k}=s_k$, $1 \leq k \leq m$.

To prove the first case, we construct the instance $I'$ of PWT. We extend the initial settings by giving to the vehicle capacity $W=Q$ and define its velocity range as $\upsilon_{max}=2$ and $\upsilon_{min}=1$.  Furthermore, we set $R^*=Q$. Consider the nonlinear function $f'_{R^*} \colon \left[0,W\right] \rightarrow \mathbb{R}$ defined as
{\footnotesize
$$ f'_{R^*}\left(w\right)=w-\frac{R^*}{2- w/W}=w-\frac{Q}{2- w/Q}.$$
}
$f'_{R^*}$, which is defined on the interval $\left[0,W\right]$, is a continuous concave function that reaches its unique maximum of 0 in the point $w^* = W = Q$, i.e. $f'_{R^*}\left(w \right)<0$ for $w \in [0,W]$ and $w \not = w^*$. Then 0 is the maximum value for $f'_{R^*}$ when being restricted to integer input, too. Therefore, the objective function for PWT is given by
{\footnotesize
$$ g'_{R^*}\left(x\right)=\displaystyle\sum_{k=1}^m {p_{1k}x_k}-\frac{R^*}{2- \frac{1}{W} \displaystyle\sum_{k=1}^m {w_{1k}x_k}}.$$
}
There exists an $x \in \{0,1\}^m$ such that $g'_{R^*}(x) \geq 0$ iff $$\sum_{k=1}^m s_kx_k=\sum_{k=1}^m w_{1k}x_k=\sum_{k=1}^m p_{1k}x_k=Q.$$ Therefore, the instance of SSP has answer YES iff the optimal solution of the PWT instance $I'$ has objective value at least 0. Obviously, the reduction can be carried out in polynomial time which completes the proof of the first case.

To prove the second case, we construct the instance $I''$ of PWT where our settings assume $$W = \sum_{k=1}^m s_k$$ and $$\upsilon_{min}=\sqrt{Q/(2W-Q)} = \upsilon_{max}/2.$$ We then set $$R^* =  \upsilon_{min} \cdot W \left(\upsilon_{max} - \upsilon_{min} \cdot Q/W\right)^2.$$ Finally, this gives us the functions $f''_{R^*}\left(w\right)$ and $g''_{R^*}\left(x\right)$ of the following forms:
{\footnotesize
$$ f''_{R^*}\left(w\right)=w-\frac{R^*}{\upsilon_{max}- \upsilon_{min} \cdot w/W}.$$
}
{\footnotesize
$$ g''_{R^*}\left(x\right)=\displaystyle\sum_{k=1}^m {p_{1k}x_k}-\frac{R^*}{\upsilon_{max}- \frac{\upsilon_{min}}{W} \displaystyle\sum_{k=1}^m {w_{1k}x_k}}.$$
}

Similarly, there exists an $x \in \{0,1\}^m$ such that $g''_{R^*}(x) \geq 0$ iff $$\sum_{k=1}^m s_kx_k=\sum_{k=1}^m w_{1k}x_k=\sum_{k=1}^m p_{1k}x_k=Q.$$ Therefore, the instance of SSP has answer YES iff the optimal solution of the PWT instance $I''$ has objective value at least 0, while the reduction can be carried out in polynomial time.

\end{proof}
The objective function can take on negative and non-negative values.
Theorem~\ref{thm:nonapprox} rules out meaningful approximations for the original objective functions $B$ and we state this in the following corollary.

\begin{corollary}
There is no polynomial time approximation algorithm for PWT with a meaningful approximation ratio, unless P=NP.
\end{corollary}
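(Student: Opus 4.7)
The plan is to argue by reduction from Theorem~\ref{thm:nonapprox}: any polynomial-time algorithm $A$ with a finite multiplicative approximation ratio would let us decide in polynomial time whether a given PWT instance admits a solution with $B(x)\geq 0$, contradicting the NP-completeness just established unless P=NP. So I would first make precise what a ``finite ratio'' means for a maximisation problem, namely some $\rho<\infty$ with $A(I)\geq OPT(I)/\rho$ on every instance $I$, and then build a polynomial-time decider for the NP-complete problem of Theorem~\ref{thm:nonapprox} out of $A$.

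The decider is simply: on input $I$, compute $A(I)$ and return YES iff $A(I)\geq 0$. Correctness follows from a two-line case distinction on the sign of $OPT(I)$. If $OPT(I)\geq 0$ then $A(I)\geq OPT(I)/\rho\geq 0$, so YES is correct; if $OPT(I)<0$ then, since $A(I)$ is the value of a feasible solution and $OPT(I)$ is the maximum, $A(I)\leq OPT(I)<0$, so NO is correct. Since $A$ runs in polynomial time, so does the decider, and Theorem~\ref{thm:nonapprox} then forces $P=NP$.

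The only delicate point I expect is making the meaning of ``meaningful approximation ratio'' airtight, because the standard multiplicative definition is already uncomfortable for objectives that can take both signs; indeed, on an instance with $OPT(I)<0$ the inequalities $A(I)\geq OPT(I)/\rho$ and $A(I)\leq OPT(I)$ are jointly infeasible for any $\rho>1$, so in some sense the impossibility is even stronger than what Theorem~\ref{thm:nonapprox} is needed for. I would therefore add one sentence stating that the argument above handles whichever reasonable convention one adopts (multiplicative ratio on positive part, additive error polynomial in the input size, etc.), because each of them at least determines the sign of $OPT(I)$ in polynomial time, which is the only feature the reduction uses.
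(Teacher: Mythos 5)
Your argument is correct and is precisely the reasoning the paper relies on implicitly: the corollary is stated without an explicit proof, justified only by the remark that Theorem~\ref{thm:nonapprox} rules out meaningful approximations, and your sign-based decider (YES iff $A(I)\geq 0$, using $A(I)\geq OPT(I)/\rho\geq 0$ when $OPT(I)\geq 0$ and $A(I)\leq OPT(I)<0$ otherwise) is exactly the standard formalisation of that remark. Your added caveat about what ``meaningful ratio'' should mean for sign-changing objectives is a sensible clarification but does not change the substance.
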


\subsection{FPTAS for amount over baseline travel cost}
As there are no polynomial time approximation algorithms for fixed approximation ratio for PWT, we consider the amount that can be gained over the cost when the vehicle travels empty as the objective. This is motivated by the scenario where the vehicle has to travel along the given route and the goal is to maximise the gain over this baseline cost. Note that an optimal solution for this objective is also an optimal solution for PWT. However, approximation results do not carry over to PWT as the objective values are ``shifted'' by the cost when traveling empty.

\begin{algorithm*}[th]
\begin{itemize}
\item Set $L = \max_{e_{ij} \in M} B'(e_{ij})$, $r = \epsilon L /m$, and $d_{in} = \sum_{l=i}^n d_{l}$, $1\leq i \leq n$.
\item Compute order $\preceq$ on the items $e_{ij}$ by sorting them in lexicographic order with respect to their indices $(i,j)$.
\item For the first item $e_{ij}$ according to $\preceq$, set $\beta(i,j,0) = B'(\emptyset)$ and $\beta(i,j,w_{ij}) = B'(e_{ij})$.
\item Consider the remaining items of $M$ in the order of $\preceq$ and do for each item  $e_{ij}$ and its predecessor $e_{i'j'}$:
\begin{itemize}

\item In increasing order of $k$ do for each $\beta(i',j',k)$ with $\beta(i',j',k)\not = -\infty$
\begin{itemize}
\item If there is no $\beta(i,j,k')$ with ($\lfloor \beta(i,j,k')/r\rfloor \geq \lfloor \beta(i',j',k)/r\rfloor$ and $k'<k$),\\ set  $\beta(i,j,k) = max\{ \beta(i,j,k), \beta(i',j',k) \}$.
\item If there is no $\beta(i,j,k')$ with ($\lfloor \beta(i,j,k')/r\rfloor \geq \lfloor \beta(i',j',k+w_{ij})/r\rfloor$ and $k'<k+w_{ij}$),\\ set $\beta(i,j,k+w_{ij}) = max\{ \beta(i,j,k+w_{ij}), \beta(i',j',k) +  p_{ij} + Rd_{in} ( \frac{1}{v_{\max}-\nu k} - \frac{1}{v_{max}-\nu (k+w_{ij})})  \}$.

\end{itemize}

\end{itemize}
\end{itemize}
\caption{FPTAS for $B'(x)$}
\label{alg:fptas}
\end{algorithm*}

Let 
$$B(\emptyset) = - R \cdot \sum_{i=1}^n d_i / v_{\max}$$
be the travel cost (or benefit) for the empty truck. $B(\emptyset)$ can be seen as the set up cost that we have to pay at least.
We consider the objective  
$$B'(x) = B(x) - B(\emptyset),$$
 i. e. for the amount that we can gain over this setup cost, and give an FPTAS.
Note, that we have $-R \cdot T(x) \leq B(\emptyset)$ for any $x \in \{0,1\}^m$ and $P(x)- R \cdot T(x) - B(\emptyset)=0$ if $x = 0^m$. 

We now give a FPTAS for the amount that can be gained over the cost when the vehicle travels empty
and denote by OPT the optimal value for this objective, i.e.
$$
OPT = \max_{x \in \{0,1\}^m} B'(x).
$$

Considering the dynamic program for $B'(x)$ instead of $B(x)$ increases each entry by $|B(\emptyset)|$ and therefore obtains an optimal solution for $B'(x)$ in pseudo-polynomial time. In order to obtain an FPTAS, we round the values of $B'(x)$ and store for each rounded value only the minimal achievable weight.

Let 
$$
t(w)= \frac{1}{v_{\max} - \nu w}
$$
denote the travel time per unit distance when traveling with weight $w$.
We have $t(x+w) - t(x) \geq t(w)$ for any $x\geq 0$ as $t(w)$ is a convex function.

Consider the value $B(e_{ij}) - B(\emptyset)$ which gives the additional amount over $B(\emptyset)$  when only packing item $e_{ij}$.
We assume that there exists at least one item $e_{ij}$ with  $B(e_{ij}) - B(\emptyset) >0$ as otherwise $OPT=0$ the solution being $\{0\}^m$.
 Let $P(e_{ij})$ and $T(e_{ij})$ be the profit and travel time when only choosing item $e_{ij}$. Furthermore, let $x^* = \arg \max_{x \in \{0,1\}^m} B'(x)$ be an optimal solution of value $OPT>0$.

We have 
$$
 \sum_{i=1}^n \sum_{j=1}^{m_i} (P(e_{ij})- R \cdot T(e_{ij})) x_{ij}^* -  B(\emptyset)
  \geq   B(x^*) - B(\emptyset) = OPT
$$
as $t(w)$ is monotonically increasing and convex.

 Therefore the item $e_{ij}$ of $x^*$ with $B(e_{ij}) - B(\emptyset)>0$ maximal fulfils  
$B(e_{ij}) - B(\emptyset) \geq OPT/m.
$

Let $$L =max_{e_{ij} \in M}  B'(e_{ij}) >0$$ be  maximal possible objective value when choosing exactly one item. We have $$L \geq OPT/m \text{ and }L \leq OPT.$$

We set $r = \epsilon L /m$, where $\epsilon$ is the approximation parameter for the FPTAS.
For the FPTAS we round $B'(x)$ to $\lfloor(B'(x)/r \rfloor$ and store for each of such values the minimal weight obtained.
As we only store entries with  $0 \leq B'(x) \leq OPT$, and for each such integer based on dominance and rounding one entry, the total number of entries per column is upper bounded by 
$$(OPT/ r) +1 \leq OPT/ (\epsilon L/m) + 1 \leq m^2/\epsilon +1$$
and number of entries in the dynamic programming table is $O(m^3 /\epsilon)$.

In each step, we make an error of at most $$r = \epsilon L /m \leq \epsilon OPT /m$$ and the error after $m$ steps is at most $\epsilon L \leq \epsilon OPT.$ Hence, the solution $x$ with maximal $B'$-value after having considered all items fulfils $$B'(x) \geq (1-\epsilon) OPT.$$

To implement the idea (see Algorithm~\ref{alg:fptas}), we only store an entry $\beta(i,j,k)$ if there is no entry $\beta(i,j,k')$ with $$\lfloor \beta(i,j,k')/r\rfloor \geq \lfloor \beta(i,j,k)/r\rfloor \text{ and }k'<k.$$ 
Hence, for each possible value $\lfloor \beta(i,j,k)/r\rfloor$ at most one entry is stored and the number of entries for each column $\beta(i,j,\cdot)$ is upper bounded by  $m^2/\epsilon +1$ (as stated above). 
Using for each $\beta(i,j,\cdot)$ a list which stores the entries $\beta(i,j,k)$ in increasing order of $k$ can be used for our implementation.

Based on our investigations and the design of Algorithm~\ref{alg:fptas}, we can state the following result.
\begin{theorem}
Algorithm~\ref{alg:fptas} is a fully polynomial time approximation scheme (FPTAS) for the objective $B'$. It obtains for any $\epsilon$, $0< \epsilon \leq 1$, a solution $x$ with $B'(x) \geq (1-\epsilon) \cdot OPT$ in time $O(m^3/\epsilon)$.
\end{theorem}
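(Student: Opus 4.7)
The theorem has two components: (i) the approximation guarantee $B'(\hat x) \geq (1-\epsilon)\,OPT$, where $\hat x$ is the solution returned by Algorithm~\ref{alg:fptas}, and (ii) the runtime bound $O(m^3/\epsilon)$. I would prove these in turn, reusing the correctness of the exact dynamic program already established earlier.

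For (i), I would establish the following inductive invariant along the order $\preceq$: after processing the first $c$ items, ending at item $e_{ij}$, every subset $S \subseteq I_{ij}$ of weight $k$ is ``shadowed'' by some stored entry $\beta(i,j,k')$ with $k' \leq k$ and $\beta(i,j,k') \geq v(S) - c\cdot r$, where $v(S)$ denotes the benefit the exact DP of Section~3 would assign to $S$ at column $(i,j)$. The base case follows from the initialisation. For the inductive step, the predecessor column satisfies the invariant by hypothesis, so the two candidate transitions computed in Algorithm~\ref{alg:fptas} approximate the exact DP values up to additive error $c\cdot r$; each candidate is then either stored (no new error) or discarded in favour of a stored entry in the same or higher rounded bucket with no larger weight (additional error at most $r$). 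After $m$ steps the accumulated error is at most $m r = \epsilon L$. Combining this with the bound $L \geq OPT/m$ established in the preamble (which in turn follows from the convexity of $t(w)$ and the single-item inequality $B'(x^*) \leq \sum_{i,j} B'(e_{ij}) x^*_{ij} \leq m L$), applying the invariant to $x^*$ yields $B'(\hat x) \geq OPT - \epsilon L \geq (1-\epsilon)\,OPT$.

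For (ii), each column retains at most one entry per rounded bucket of width $r$, and since stored entries represent benefits of actual subsets they lie in $[0, OPT]$; hence each column holds at most $OPT/r + 1 \leq m^2/\epsilon + 1$ entries. Maintaining each column as a list sorted in increasing $k$ (a property preserved by the update rules since the bucket-dominance criterion prevents two entries from sharing a bucket) allows the dominance checks and insertions to be performed in amortised constant time via a single coordinated sweep of the predecessor list. Summing across $m$ columns gives total time $O(m \cdot m^2/\epsilon) = O(m^3/\epsilon)$, which is polynomial in $|I|$ and $1/\epsilon$, yielding the FPTAS property.

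The main obstacle, in my judgement, is making the inductive invariant robust under the weight-reducing dominance rule: when a discarded entry has weight $k$ and the surviving entry has weight $k' < k$, subsequent items extending from the surviving entry experience a different, non-linear marginal travel cost. The key fact is that $t(w)$ is convex and monotonically increasing, so the marginal cost of adding any later item is non-increasing in the starting weight; consequently, extending the surviving entry can only improve future contributions compared to extending the discarded one, and the only loss against the exact DP comes from the rounding gap of at most $r$ per step. Formalising this monotonicity across interleaved dominance and rounding decisions, so that errors compose additively rather than multiplicatively, is where the technical care of the proof concentrates.
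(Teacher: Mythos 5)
Your proposal follows essentially the same route as the paper: the single\hyphen item bound $L \geq OPT/m$ via convexity of $t$, the rounding grid $r = \epsilon L/m$ with one surviving entry per bucket, additive error at most $r$ per item giving $mr = \epsilon L \leq \epsilon\, OPT$, and the count of at most $m^2/\epsilon + 1$ entries per column yielding $O(m^3/\epsilon)$ time. In fact you are more careful than the paper, which states the per\hyphen step error bound without formalising the invariant; your explicit ``shadowing'' induction and your identification of why the keep\hyphen the\hyphen lighter\hyphen entry rule is sound are exactly the points the paper glosses over. One small slip: convexity of $t$ makes the marginal cost $t(k+w_{ij})-t(k)$ \emph{non-decreasing} in the starting weight $k$ (you wrote non-increasing), which is precisely what you need for your (correct) conclusion that the surviving lower\hyphen weight entry can only do better on future extensions.
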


The construction of the FPTAS only used the fact that the travel time per unit distance is monotonically increasing and convex. Hence, the FPTAS holds for any PWT problem where the travel time per unit distance has this property.

\section{Experiments and Results}\label{sec:experiments}

In this section, we investigate the effectiveness of the proposed DP and FPTAS approaches based on our implementations in Java\footnote{The code will be made available online at time of publication.}. We mainly focus on two issues: 1) studying how the DP and FPTAS perform compared to the state-of-the-art approaches; 2) investigating how the performance and accuracy of the FPTAS change when the parameter $\epsilon$ is altered.

\begin{landscape}
\setlength{\tabcolsep}{3pt}
\begin{table*}
\centering
\caption{Results on  Small Range Instances}
\label{tab:smallresfptas}
{\footnotesize
\begin{adjustwidth}{0cm}{}
\begin{tabular}{rc|@{\,}r@{\,}|r|r|r|rr|rr|rr|rr|rr|rr}
\hline
\multirow{4}{*}{Instance}&	\multirow{4}{*}{m}& \multicolumn{1}{c|}{\multirow{4}{*}{OPT}} & \multicolumn{3}{c|}{Exact Approaches} & \multicolumn{12}{c}{Approximation Approaches} \\
\hhline{~~~---------------}
&&&	\multicolumn{1}{c|}{\exactMIP} &\multicolumn{1}{c|}{BIB} &\multicolumn{1}{c|}{DP} &\multicolumn{2}{c|}{approxMIP} &\multicolumn{10}{c}{FPTAS} \\
\hhline{~~~---------------}
&&&&&&&&	\multicolumn{2}{c|}{$\epsilon=0.0001$}&	\multicolumn{2}{c|}{$\epsilon=0.01$}&	\multicolumn{2}{c|}{$\epsilon=0.1$}&	\multicolumn{2}{c|}{$\epsilon=0.25$}&	\multicolumn{2}{c}{$\epsilon=0.75$}\\
&	&	& 
RT(s)& RT(s)& RT(s)&
AR(\%)& RT(s)&
AR(\%)& RT(s)&
AR(\%)& RT(s)&
AR(\%)& RT(s)&
AR(\%)& RT(s)&
AR(\%)& RT(s)\\
\hline																																					
\multicolumn{18}{c}{instance family \texttt{eil101}} 	\\																																		
\hline																																			
uncorr\_01	&	100	&	1651.6970	&	1.217	&	5.694	&	0.027	&	100.0000	&	3.838	&	100.0000	&	0.001	&	100.0000	&	0.001	&	100.0000	&	0.001	&	100.0000	&	0.001	&	100.0000	&	0.025	\\
uncorr\_06	&	100	&	10155.4942	&	12.605	&	3.698	&	0.065	&	100.0000	&	4.961	&	100.0000	&	0.012	&	100.0000	&	0.011	&	100.0000	&	0.011	&	100.0000	&	0.011	&	99.9928	&	0.063	\\
uncorr\_10	&	100	&	10297.7134	&	3.525	&	0.795	&	0.036	&	100.0000	&	0.624	&	100.0000	&	0.017	&	100.0000	&	0.017	&	99.9939	&	0.016	&	99.9939	&	0.016	&	99.9653	&	0.037	\\
uncorr-s-w\_01	&	100	&	2152.6188	&	0.328	&	7.566	&	0.001	&	100.0000	&	3.978	&	100.0000	&	0.000	&	100.0000	&	0.000	&	100.0000	&	0.000	&	100.0000	&	0.000	&	100.0000	&	0.003	\\
uncorr-s-w\_06	&	100	&	4333.8512	&	12.590	&	2.215	&	0.012	&	100.0000	&	2.699	&	100.0000	&	0.008	&	100.0000	&	0.007	&	100.0000	&	0.007	&	99.9569	&	0.008	&	99.9569	&	0.017	\\
uncorr-s-w\_10	&	100	&	9048.4908	&	37.144	&	1.107	&	0.022	&	100.0000	&	1.763	&	100.0000	&	0.012	&	100.0000	&	0.012	&	100.0000	&	0.012	&	100.0000	&	0.013	&	99.9355	&	0.020	\\
b-s-corr\_01	&	100	&	4441.9852	&	1.420	&	125.954	&	0.014	&	100.0000	&	5.366	&	100.0000	&	0.010	&	100.0000	&	0.009	&	100.0000	&	0.009	&	100.0000	&	0.008	&	100.0000	&	0.013	\\
b-s-corr\_06	&	100	&	10260.9767	&	4.509	&	22.541	&	0.101	&	100.0000	&	2.761	&	100.0000	&	0.058	&	100.0000	&	0.057	&	100.0000	&	0.048	&	100.0000	&	0.043	&	100.0000	&	0.087	\\
b-s-corr\_10	&	100	&	13630.6153	&	11.013	&	27.081	&	0.187	&	99.9971	&	3.713	&	100.0000	&	0.103	&	100.0000	&	0.101	&	99.9971	&	0.081	&	99.9606	&	0.065	&	99.8143	&	0.113	\\
uncorr\_01	&	500	&	17608.5781	&	19.594	&	27.581	&	0.247	&	100.0000	&	5.757	&	100.0000	&	0.171	&	100.0000	&	0.161	&	100.0000	&	0.153	&	100.0000	&	0.163	&	100.0000	&	0.377	\\
uncorr\_06	&	500	&	56294.5239	&	384.213	&	13.354	&	2.829	&	100.0000	&	7.800	&	100.0000	&	2.370	&	100.0000	&	2.344	&	100.0000	&	2.300	&	100.0000	&	2.212	&	100.0000	&	2.340	\\
uncorr\_10	&	500	&	66141.4840	&	211.302	&	2.325	&	4.010	&	100.0000	&	0.718	&	100.0000	&	3.720	&	100.0000	&	3.645	&	100.0000	&	3.446	&	100.0000	&	3.531	&	100.0000	&	3.632	\\
uncorr-s-w\_01	&	500	&	13418.8406	&	4.337	&	34.866	&	0.090	&	100.0000	&	50.310	&	100.0000	&	0.085	&	100.0000	&	0.090	&	100.0000	&	0.084	&	100.0000	&	0.087	&	99.9910	&	0.085	\\
uncorr-s-w\_06	&	500	&	34280.4730	&	346.430	&	7.285	&	1.040	&	100.0000	&	9.609	&	100.0000	&	0.964	&	100.0000	&	0.933	&	100.0000	&	0.905	&	100.0000	&	0.936	&	100.0000	&	0.920	\\
uncorr-s-w\_10	&	500	&	50836.6588	&	519.902	&	3.338	&	2.022	&	100.0000	&	3.354	&	100.0000	&	2.005	&	100.0000	&	1.783	&	100.0000	&	1.753	&	100.0000	&	1.784	&	100.0000	&	2.147	\\
b-s-corr\_01	&	500	&	21306.9158	&	40.482	&	624.204	&	1.534	&	100.0000	&	13.338	&	100.0000	&	1.373	&	100.0000	&	1.279	&	100.0000	&	1.116	&	100.0000	&	0.949	&	100.0000	&	0.716	\\
b-s-corr\_06	&	500	&	69370.2367	&	236.387	&	97.313	&	14.616	&	99.9996	&	7.847	&	100.0000	&	13.393	&	100.0000	&	12.975	&	100.0000	&	11.642	&	99.9996	&	9.741	&	99.9996	&	6.018	\\
b-s-corr\_10	&	500	&	82033.9452	&	376.569	&	218.728	&	22.011	&	100.0000	&	2.309	&	100.0000	&	21.372	&	100.0000	&	20.829	&	100.0000	&	18.573	&	100.0000	&	15.313	&	99.9943	&	8.840	\\
uncorr\_01	&	1000	&	36170.9109	&	218.306	&	114.567	&	1.872	&	99.9993	&	11.918	&	100.0000	&	1.891	&	100.0000	&	1.875	&	100.0000	&	1.832	&	100.0000	&	1.845	&	100.0000	&	1.764	\\
uncorr\_06	&	1000	&	93949.1981	&	1261.949	&	36.847	&	20.944	&	100.0000	&	17.971	&	100.0000	&	17.024	&	100.0000	&	16.615	&	100.0000	&	16.545	&	100.0000	&	16.378	&	100.0000	&	15.713	\\
uncorr\_10	&	1000	&	122963.6617	&	620.896	&	4.821	&	30.116	&	100.0000	&	2.184	&	100.0000	&	27.305	&	100.0000	&	26.783	&	100.0000	&	26.541	&	100.0000	&	26.051	&	100.0000	&	23.905	\\
uncorr-s-w\_01	&	1000	&	27800.9614	&	241.957	&	399.158	&	0.802	&	100.0000	&	4985.566	&	100.0000	&	0.730	&	100.0000	&	0.690	&	100.0000	&	0.688	&	100.0000	&	0.724	&	100.0000	&	0.687	\\
uncorr-s-w\_06	&	1000	&	61764.4599	&	1152.624	&	12.792	&	9.872	&	100.0000	&	19.063	&	100.0000	&	8.686	&	100.0000	&	8.812	&	100.0000	&	8.560	&	100.0000	&	8.740	&	100.0000	&	8.396	\\
uncorr-s-w\_10	&	1000	&	103572.4074	&	2146.408	&	7.644	&	15.047	&	100.0000	&	9.688	&	100.0000	&	14.030	&	100.0000	&	13.912	&	100.0000	&	13.797	&	100.0000	&	13.982	&	100.0000	&	13.492	\\
b-s-corr\_01	&	1000	&	46886.1094	&	378.551	&	6129.531	&	11.783	&	99.9988	&	46.394	&	100.0000	&	11.714	&	100.0000	&	11.358	&	100.0000	&	10.793	&	100.0000	&	9.592	&	100.0000	&	6.536	\\
b-s-corr\_06	&	1000	&	125830.6887	&	643.533	&	919.201	&	94.523	&	99.9999	&	10.311	&	100.0000	&	92.411	&	100.0000	&	91.039	&	100.0000	&	83.002	&	99.9999	&	71.078	&	100.0000	&	45.433	\\
b-s-corr\_10	&	1000	&	161990.5015	&	862.572	&	1646.520	&	151.601	&	100.0000	&	7.160	&	100.0000	&	150.279	&	100.0000	&	149.722	&	100.0000	&	134.764	&	100.0000	&	113.049	&	99.9981	&	70.135	\\
\hline																																					
\end{tabular}
\end{adjustwidth}
}

\end{table*}
\end{landscape}

\begin{landscape}
\setlength{\tabcolsep}{2pt}
\begin{table*}
\centering
\caption{Results of DP and FPTAS on Large Range Instances}
\label{tab:mediumresfptas}
{\footnotesize
\begin{adjustwidth}{0cm}{}
\begin{tabular}{rc|rr|rr|rr|rr|rr|rr|rr|rr}
\hline
\multirow{3}{*}{Instance}&	\multirow{3}{*}{m}& \multicolumn{2}{c|}{DP} &\multicolumn{14}{c}{FPTAS} \\
\hhline{~~----------------}
&&&&	\multicolumn{2}{c|}{$\epsilon=0.0001$}&	\multicolumn{2}{c|}{$\epsilon=0.001$}&	\multicolumn{2}{c|}{$\epsilon=0.01$}&	\multicolumn{2}{c|}{$\epsilon=0.1$}&	\multicolumn{2}{c|}{$\epsilon=0.25$}&	\multicolumn{2}{c|}{$\epsilon=0.5$}&	\multicolumn{2}{c}{$\epsilon=0.75$}\\
&	& 
OPT& RT(s)&
AR(\%)& RT(s)&
AR(\%)& RT(s)&
AR(\%)& RT(s)&
AR(\%)& RT(s)&
AR(\%)& RT(s)&
AR(\%)& RT(s)&
AR(\%)&RT(s)\\
\hline \multicolumn{18}{c}{instance family \texttt{eil101\_large-range}} \\ \hline																																	
uncorr\_01&	100&	69802802.2801	&	0.030	&	100.0000	&	0.002	&	100.0000	&	0.002	&	100.0000	&	0.002	&	100.0000	&	0.002	&	100.0000	&	0.002	&	100.0000	&	0.002	&	100.0000	&	0.029	\\
uncorr\_06&	100&	204813765.6933	&	0.053	&	100.0000	&	0.019	&	100.0000	&	0.020	&	100.0000	&	0.019	&	100.0000	&	0.019	&	100.0000	&	0.019	&	100.0000	&	0.019	&	100.0000	&	0.049	\\
uncorr\_10&	100&	172176182.1249	&	0.041	&	100.0000	&	0.028	&	100.0000	&	0.028	&	100.0000	&	0.028	&	100.0000	&	0.028	&	100.0000	&	0.027	&	100.0000	&	0.026	&	99.9628	&	0.037	\\
uncorr-s-w\_01&	100&	36420530.5753	&	0.006	&	100.0000	&	0.003	&	100.0000	&	0.003	&	100.0000	&	0.003	&	100.0000	&	0.003	&	100.0000	&	0.003	&	100.0000	&	0.002	&	100.0000	&	0.004	\\
uncorr-s-w\_06&	100&	148058928.2952	&	0.098	&	100.0000	&	0.072	&	100.0000	&	0.502	&	100.0000	&	0.072	&	100.0000	&	0.069	&	100.0000	&	0.065	&	100.0000	&	0.059	&	100.0000	&	0.070	\\
uncorr-s-w\_10&	100&	142538516.4602	&	0.136	&	100.0000	&	0.101	&	100.0000	&	0.104	&	100.0000	&	0.103	&	99.9978	&	0.096	&	99.9978	&	0.086	&	99.9978	&	0.073	&	99.9978	&	0.089	\\
m-s-corr\_01&	100&	19549602.2671	&	0.003	&	100.0000	&	0.002	&	100.0000	&	0.002	&	100.0000	&	0.002	&	100.0000	&	0.002	&	100.0000	&	0.002	&	100.0000	&	0.001	&	100.0000	&	0.002	\\
m-s-corr\_06&	100&	137203175.1921	&	0.147	&	100.0000	&	0.115	&	100.0000	&	0.118	&	100.0000	&	0.113	&	100.0000	&	0.089	&	100.0000	&	0.063	&	100.0000	&	0.040	&	100.0000	&	0.043	\\
m-s-corr\_10&	100&	225584278.6004	&	0.424	&	100.0000	&	0.326	&	100.0000	&	0.329	&	100.0000	&	0.312	&	100.0000	&	0.200	&	100.0000	&	0.179	&	100.0000	&	0.086	&	100.0000	&	0.073	\\
uncorr\_01&	500&	385692662.0930	&	0.470	&	100.0000	&	0.451	&	100.0000	&	0.454	&	100.0000	&	0.619	&	100.0000	&	0.508	&	100.0000	&	0.445	&	100.0000	&	0.430	&	100.0000	&	0.517	\\
uncorr\_06&	500&	958013934.6172	&	3.539	&	100.0000	&	3.749	&	100.0000	&	7.431	&	100.0000	&	3.947	&	100.0000	&	3.690	&	99.9996	&	3.677	&	99.9996	&	3.486	&	99.9993	&	3.021	\\
uncorr\_10&	500&	844949838.4389	&	4.870	&	100.0000	&	5.393	&	100.0000	&	5.716	&	100.0000	&	5.483	&	100.0000	&	5.135	&	100.0000	&	4.851	&	99.9992	&	4.609	&	99.9992	&	4.295	\\
uncorr-s-w\_01&	500&	182418888.9364	&	1.157	&	100.0000	&	1.157	&	100.0000	&	1.199	&	100.0000	&	1.145	&	99.9995	&	1.112	&	99.9995	&	1.063	&	99.9995	&	0.977	&	99.9904	&	0.929	\\
uncorr-s-w\_06&	500&	780432253.0187	&	22.390	&	100.0000	&	25.040	&	100.0000	&	26.276	&	100.0000	&	24.024	&	100.0000	&	23.282	&	99.9997	&	21.756	&	99.9997	&	18.293	&	99.9997	&	18.411	\\
uncorr-s-w\_10&	500&	714433353.7957	&	30.959	&	100.0000	&	34.458	&	100.0000	&	39.004	&	100.0000	&	34.308	&	100.0000	&	32.308	&	99.9996	&	28.792	&	99.9990	&	26.392	&	99.9990	&	25.971	\\
m-s-corr\_01&	500&	96463941.1275	&	2.335	&	100.0000	&	2.478	&	100.0000	&	2.782	&	100.0000	&	2.695	&	100.0000	&	1.509	&	100.0000	&	0.963	&	100.0000	&	0.546	&	100.0000	&	0.408	\\
m-s-corr\_06&	500&	666701000.1488	&	108.705	&	100.0000	&	126.833	&	100.0000	&	139.630	&	100.0000	&	122.750	&	100.0000	&	62.479	&	100.0000	&	33.547	&	100.0000	&	17.959	&	100.0000	&	10.642	\\
m-s-corr\_10&	500&	1082009880.5886	&	262.999	&	100.0000	&	299.862	&	100.0000	&	317.352	&	100.0000	&	274.284	&	100.0000	&	145.087	&	100.0000	&	78.470	&	99.9994	&	41.816	&	99.9994	&	25.924	\\
uncorr\_01&	1000&	777386336.9660	&	4.222	&	100.0000	&	4.397	&	100.0000	&	4.347	&	100.0000	&	4.309	&	100.0000	&	4.341	&	100.0000	&	4.377	&	100.0000	&	4.280	&	100.0000	&	4.240	\\
uncorr\_06&	1000&	1933319297.4248	&	46.043	&	100.0000	&	51.383	&	100.0000	&	53.087	&	100.0000	&	48.861	&	100.0000	&	52.957	&	99.9999	&	52.062	&	99.9997	&	50.286	&	99.9996	&	51.488	\\
uncorr\_10&	1000&	1693797490.1704	&	64.485	&	100.0000	&	76.744	&	100.0000	&	78.847	&	100.0000	&	74.128	&	100.0000	&	82.754	&	100.0000	&	77.057	&	100.0000	&	72.283	&	100.0000	&	72.567	\\
uncorr-s-w\_01&	1000&	361991311.8336	&	14.254	&	100.0000	&	15.072	&	100.0000	&	15.670	&	100.0000	&	14.523	&	100.0000	&	14.110	&	100.0000	&	14.039	&	100.0000	&	12.088	&	100.0000	&	11.129	\\
uncorr-s-w\_06&	1000&	1574469459.3163	&	286.843	&	100.0000	&	318.096	&	100.0000	&	330.508	&	100.0000	&	337.289	&	100.0000	&	334.318	&	100.0000	&	307.588	&	99.9998	&	270.013	&	99.9996	&	245.927	\\
uncorr-s-w\_10&	1000&	1439410696.3695	&	393.793	&	100.0000	&	438.775	&	100.0000	&	455.830	&	100.0000	&	464.527	&	100.0000	&	441.955	&	100.0000	&	433.672	&	99.9994	&	378.917	&	99.9994	&	340.813	\\
m-s-corr\_01&	1000&	191170309.5684	&	46.858	&	100.0000	&	58.031	&	100.0000	&	59.987	&	100.0000	&	58.101	&	100.0000	&	31.703	&	100.0000	&	18.771	&	100.0000	&	10.728	&	100.0000	&	6.831	\\
m-s-corr\_06&	1000&	1315708161.7720	&	2393.205	&	100.0000	&	2512.281	&	100.0000	&	2606.412	&	100.0000	&	1921.573	&	100.0000	&	666.749	&	100.0000	&	364.452	&	100.0000	&	208.969	&	100.0000	&	150.060	\\
m-s-corr\_10&	1000&	2163713055.3759	&	6761.490	&	100.0000	&	6668.535	&	100.0000	&	6441.906	&	100.0000	&	4526.653	&	100.0000	&	1334.882	&	100.0000	&	703.258	&	100.0000	&	397.527	&	100.0000	&	282.211	\\
\hline																																					
\end{tabular}
\end{adjustwidth}
}
\end{table*}
\end{landscape}

In order to be comparable to the mixed integer programming (MIP) and the branch-infer-and-bound (BIB) approaches presented in ~\cite{DBLP:journals/corr/PolyakovskiyN15}, we conduct our experiments on the same families of test instances. Our experiments are carried out on a computer with 4GB RAM and a 3.06GHz Intel Dual Core processor, which is also the same as the machine used in the paper mentioned above. 

We compare the DP to the exact MIP (\textit{{\exactMIP}}) and the branch-infer-and-bound approaches as well as the FPTAS to the approximate MIP (\textit{approxMIP}), as the former three are all exact approaches and the latter two are all approximations. Table~\ref{tab:smallresfptas} demonstrates the results for a route of 101 cities and various types of packing instances. For this particular family, we consider three types of instances: \textit{uncorrelated} (uncorr), \textit{uncorrelated with similar weights} (uncorr-s-w) and \textit{bounded strongly correlated} (b-s-corr), which are further distinguished by the different correlations between profits and weights. In combination with three different numbers of items and three settings of the capacity, we have 27 instances in total, as shown in the column called ``\textit{Instance}''. Similarly to the settings in~\cite{DBLP:journals/corr/PolyakovskiyN15}, every instance with ``\_01'' postfix has a relatively small capacity. We expect such instances to be potentially easy to solve by DP and FPTAS due to the nature of the algorithms. The \textit{OPT} column shows the optimum of each instance and the \textit{RT(s)} columns illustrate the running time for each of the approaches in the time unit of a second. To demonstrate the quality of an approximate approach applied to the instances, we use the ratio between the objective value obtained by the algorithm and the optimum obtained for an instance as the approximation rate $AR(\%) = 100 \times \frac{OBJ}{OPT}$.

In the comparison of exact approaches, our results show that the DP is much quicker than the exact MIP and BIB in solving the majority of the instances. The exact MIP is slower than the DP in every case and this dominance is mostly significant. For example, it spends around $35$ minutes to solve the instance \textit{uncorr-s-w\_10} with $1,000$ items, where the DP needs around $15$ seconds only. On the other hand, the BIB slightly beats the DP on three instances, but the DP is superior for the rest $24$ instances. An extreme case is \textit{b-s-corr\_01} with $1,000$ items where the BIB spends above $1.5$ hours while the DP solves it in $11$ seconds only. Concerning the running time of the DP, it significantly increases only for the instances having large amount of items with strongly correlated weights and profits, such as \textit{b-s-corr\_06} and \textit{b-s-corr\_10} with $1,000$ items. However, \textit{b-s-corr\_01} seems exceptional due to the limited capacity assigned to the instance.  

Our comparison between the approximation approaches shows that the FPTAS has significant advantages as well. The approximation ratios remain $100\%$ when $\epsilon$ equals $0.0001$ and $0.01$. Only when $\epsilon$ is set to $0.25$, the FPTAS starts to output the results having similar accuracies as the ones of \textit{approxMIP}. With regard to the performance, the FPTAS takes less running time than \textit{approxMIP} on the majority of the instances despite the setting of $\epsilon$. As an extreme case, \textit{approxMIP} requires hours to solve the \textit{uncorr-s-w\_01} instance with $1,000$ items, but the FPTAS takes less than a second. However, the \textit{approxMIP} performs much better on \textit{b-s-corr\_06} and \textit{b-s-corr\_10} with $1,000$ items. This somehow indicates that the underlying factors that make instances hard to solve by approximate MIP and FPTAS have different nature. Understanding these factors more and using them wisely should help to build a more powerful algorithm with mixed features of MIP and FPTAS.

In our second experiment, we use test instances which are slightly different to those in the benchmark used in ~\cite{DBLP:journals/corr/PolyakovskiyN15}. This is motivated by our findings that relaxing $\epsilon$ from $0.0001$ to $0.75$ improves the performance of FPTAS by around $50\%$ for the b-s-corr instances, while does not degrade the accuracy noticeably. At the same time, there is no significant improvement for other instances. It's surprising as shows that the performance improvement can be easily achieved on complex instances. Therefore, we study how the FPTAS performs if the instances are more complicated. The idea is to use instances with large weights, which are known to be difficult regarding dynamic programming based approaches for the classical knapsack problem. We follow the same way to create TTP instances as proposed in ~\cite{Polyakovskiy2014TTP} and generate the knapsack component of the problem as discussed in ~\cite{Pisinger20052271}. Specifically, we extend the range to generate potential profits and weights from $[1, 10^3]$ to $[1, 10^7]$ and focus on \textit{uncorrelated} (uncorr), \textit{uncorrelated with similar weights} (uncorr-s-w), and \textit{multiple strongly correlated} (m-s-corr) types of instances. Additionally, in the stage of assigning the items of a knapsack instance to particular cities of a given TSP tour, we sort the items in descending order of their profits and the second city obtains $k$, $k\in\left\{1,5,10\right\}$, items of the largest profits, the third city then has the next $k$ items, and so on. We expect that such assignment should force the algorithms to select items in the first cities of a route making the instances more challenging for the DP and FPTAS. In fact, these instances occur to be harder and force us to switch to the 128GB RAM and 8 $\times$~(2.8GHz AMD 6 core processors) cluster machine to carry out the second experiment.

Table~\ref{tab:mediumresfptas} illustrates the results of running the DP and FPTAS on the instances with the large range of profits and weights. Generally speaking, we can observe that the instances are significantly harder to solve than those ones from the first experiment, that is they take comparably more time. Similarly, the instances with large number of items, larger capacity, and strong correlation between profits and weights are now hard for the DP as well. Oppositely to the results of the previous experiment, the FPTAS performs much better when dealing with such instances in the case when $\epsilon$ is relaxed. For example, its performance is improved by $95\%$ for the instance \textit{m-s-corr\_10} with $1,000$ items when $\epsilon$ is raised from $0.0001$ to $0.75$ while the approximation rate remains at $100\%$.

\section{Conclusion}\label{sec:conclusion}
Multi-component combinatorial optimisation problems play an important role in many real-world applications. We have examined the non-linear packing while traveling problem which results from the interactions in the TTP. We designed a dynamic programming algorithm that solves the problem in pseudo-polynomial time. Furthermore, we have shown that the original objective of the problem is hard to approximate and have given an FPTAS for optimising the amount that can be gained over the smallest possible travel cost. It should be noted that the FPTAS applies to a wider range of problems as our proof only assumed that the travel cost per unit distance in dependence of with weight $w$ is monotone increasing and convex. Our experimental results on different types of knapsack instances show the advantage of the dynamic program over the previous approach based on mixed integer programming and branch-infer-and-bound concepts. Furthermore, we have demonstrated the effectiveness of the FPTAS on instances with a large weight and profit range.

\section*{Acknowledgements}

The authors were supported by Australian Research Council grants DP130104395 and DP140103400.

\bibliographystyle{abbrv}
\bibliography{references} 
\end{document}